\title{On Welfare Approximation and Stable Pricing}
\author{
Michal Feldman\inst{1}
\and
Nick Gravin\inst{2}
\and
Brendan Lucier\inst{2}
}
\institute{Tel-Aviv University; \email{mfeldman@tau.ac.il}
\and Microsoft Research; \email{ngravin@gmail.com}, \email{brlucier@microsoft.com}
}
\begin{document}


\newcommand{\AutoAdjust}[3]{\mathchoice{ \left #1 #2  \right #3}{#1 #2 #3}{#1 #2 #3}{#1 #2 #3} }
\newcommand{\Xcomment}[1]{{}}
\newcommand{\eval}[1]{\left.#1\vphantom{\big|}\right|}
\newcommand{\inteval}[1]{\Big[#1\Big]}
\newcommand{\InParentheses}[1]{\AutoAdjust{(}{#1}{)}}
\newcommand{\InBrackets}[1]{\AutoAdjust{[}{#1}{]}}
\newcommand{\Ex}[2][]{\operatorname{\mathbf E}_{#1}\InBrackets{#2\vphantom{E_{F}}}}
\newcommand{\Exlong}[2][]{\operatornamewithlimits{\mathbf E}\limits_{#1}\InBrackets{#2\vphantom{\operatornamewithlimits{\mathbf E}\limits_{#1}}}}
\newcommand{\Prx}[2][]{\operatorname{\mathbf{Pr}}_{#1}\InBrackets{#2}}
\newcommand{\Prlong}[2][]{\operatornamewithlimits{\mathbf Pr}\limits_{#1}\InBrackets{#2\vphantom{\operatornamewithlimits{\mathbf Pr}\limits_{#1}}}}
\def\prob{\Prx}
\def\expect{\Ex}
\newcommand{\super}[1]{^{(#1)}}
\newcommand{\dd}{\mathrm{d}}  
\newcommand{\given}{\;\mid\;}

\newcommand{\be}{\begin{equation}}
\newcommand{\ee}{\end{equation}}
\newcommand{\argmin}{\mathop{\rm argmin}}
\newcommand{\argmax}{\mathop{\rm argmax}}
\newcommand{\vr}[1]{{\mathbf{#1}}}
\newcommand{\bydef}{\stackrel{\bigtriangleup}{=}}
\newcommand{\eps}{\varepsilon}
\newcommand{\mm}[1]{\mathrm{#1}}
\newcommand{\mc}[1]{\mathcal{#1}}
\newcommand{\mb}[1]{\mathbf{#1}}
\newcommand{\vect}[1]{\ensuremath{\mathbf{#1}}}
\newcommand{\R}{\mathbb{R}}

\def \EE   {{\mathbb E}}
\def \OPT {\mathcal{OPT}}
\def \vf  {\textrm{vf}}
\def \dvf {\varphi}
\def \utility {u}
\def \reals {{\mathbb R}}

\newcommand{\Idr}[1]{\mathds{1}\InBrackets{#1\vphantom{\sum}}}


\newcommand{\dist}{\mathcal{F}}
\newcommand{\disti}[1][i]{{\mathcal{F}_{#1}}}
\newcommand{\distsmi}[1][i]{\dists_{\text{-}#1}}
\newcommand{\dists}{\vect{\dist}}
\newcommand{\distw}{\widetilde{\mathcal{F}}}

\newcommand{\valdist}{\mathcal{F}}
\newcommand{\valdists}{\vect{\valdist}}
\newcommand{\valdisti}[1][i]{{\valdist_{#1}}}
\newcommand{\valdistsmi}[1][i]{\valdists_{\text{-}#1}}

\newcommand{\dens}{f}
\newcommand{\denss}{\vect{ \dens}}
\newcommand{\densi}[1][i]{{\dens_{#1}}}

\newcommand{\agents}{N}
\newcommand{\nagent}{n}
\newcommand{\items}{M}
\newcommand{\nitem}{m}
\newcommand{\auction}{A}

\newcommand{\weight}{w}
\newcommand{\partition}{\Gamma}
\newcommand{\partitioni}{\Gamma_i}

\newcommand{\CWE}[0]{\textsf{CWE}}
\newcommand{\ef}[0]{envy free}
\newcommand{\EF}[0]{EF}
\newcommand{\CWEWOMC}[0]{CWE without market clearance}
\newcommand{\SW}[0]{\textsf{SW}}
\newcommand{\Revenue}[0]{\textsf{Rev}}

\newcommand{\bid}{b}
\newcommand{\bids}{\vect{\bid}}
\newcommand{\bidsmi}[1][i]{\bids_{\text{-}#1}}
\newcommand{\bidi}[1][i]{{\bid_{#1}}}

\newcommand{\val}{v}
\newcommand{\vals}{\vect{\val}}
\newcommand{\valsmi}[1][i]{\vals_{\text{-}#1}}
\newcommand{\vali}[1][i]{{\val_{#1}}}
\newcommand{\valith}[1][i]{{\val_{(#1)}}}

\newcommand{\wal}{\widetilde{v}}
\newcommand{\wals}{\vect{\wal}}
\newcommand{\walsmi}[1][i]{\wals_{\text{-}#1}}
\newcommand{\wali}[1][i]{{\wal_{#1}}}

\newcommand{\util}{u}
\newcommand{\utils}{\vect{\util}}
\newcommand{\utili}[1][i]{\util_{#1}}
\newcommand{\utilsmi}[1]{\utils_{\text{-}#1}}

\newcommand{\price}{p}
\newcommand{\prices}{\vect{\price}}
\newcommand{\pricei}[1][i]{{\price_{#1}}}

\newcommand{\type}{t}
\newcommand{\types}{\vect{\type}}
\newcommand{\typei}[1][i]{{\type_{#1}}}
\newcommand{\typesmi}[1][i]{\types_{\text{-}#1}}

\newcommand{\alloc}{X}
\newcommand{\allocs}{\vect{\alloc}}
\newcommand{\allocsmi}[1][i]{\allocs_{\text{-}#1}}
\newcommand{\alloci}[1][i]{{\alloc_{#1}}}

\newcommand{\opt}{\text{OPT}}
\newcommand{\opts}{\vect{ \opt}}
\newcommand{\opti}[1][i]{{\opt_{#1}}}
\newcommand{\optsmi}[1][i]{\opts_{\text{-}#1}}

\newcommand{\talloc}{Y}
\newcommand{\tallocs}{\vect{\talloc}}
\newcommand{\talloci}[1][i]{{\talloc_{#1}}}
\newcommand{\tallocsmi}[1][i]{\tallocs_{\text{-}#1}}

\newcommand{\rank}{r}
\newcommand{\ranks}{\vect{\rank}}
\newcommand{\ranki}[1][i]{{\rank_{#1}}}
\newcommand{\ranksmi}[1][i]{\ranks_{\text{-}#1}}

\newcommand{\crit}{\theta}
\newcommand{\crits}{\vect{\crit}}
\newcommand{\criti}[1][i]{{\crit_{#1}}}
\newcommand{\critsmi}[1][i]{\crits_{\text{-}#1}}

\newcommand{\decl}{d}
\newcommand{\decls}{\vect{\decl}}
\newcommand{\decli}[1][i]{{\decl_{#1}}}
\newcommand{\declsmi}[1][i]{\decls_{\text{-}#1}}

\newcommand{\demand}{D}
\newcommand{\demands}{\vect{\demand}}
\newcommand{\demandi}[1][i]{{\demand_{#1}}}
\newcommand{\demandsmi}[1][i]{\demands_{\text{-}#1}}

\newcommand{\sold}{{\texttt{SOLD}}}
\newcommand{\soldi}[1][i]{\sold_{#1}}

\newcommand{\CWEalg}{{\sc CWE} algorithm}

\newcommand{\Pool}{\text{Pool}}
\newcommand{\Pop}{\textbf{Pop}}
\newcommand{\Push}{\textbf{Push}}
\newcommand{\Bundle}{\textbf{Bundle}}
\newcommand{\ResolveConflict}{\textbf{ResolveConflict}}


\newcommand{\Reject}{\text{Reject}}
\newcommand{\AllocateDemand}{\textbf{AllocateDemand}}
\newcommand{\RaisePrices}{\textbf{RaisePrices}}

\maketitle

\begin{abstract}
We study the power of item-pricing as a tool for approximately optimizing social welfare in a combinatorial market.
We consider markets with $m$ indivisible items and $n$ buyers.
The goal is to set prices to the items so that, when agents purchase their most demanded sets simultaneously, no conflicts arise and the obtained allocation has nearly optimal welfare.
For gross substitutes valuations, it is well known that it is possible to achieve optimal welfare in this manner.
We ask: can one achieve approximately efficient outcomes for valuations beyond gross substitutes?
We show that even for submodular valuations, and even with only two buyers, one cannot guarantee an approximation better than $\Omega(\sqrt{m})$.
The same lower bound holds for the class of single-minded buyers as well.
Beyond the negative results on welfare approximation, our results have daunting implications on {\em revenue} approximation for these valuation classes: in order to obtain good approximation to the collected revenue, one would necessarily need to abandon the common approach of comparing the revenue to the optimal welfare; a fundamentally new approach would be required.

\end{abstract}

\section{Introduction}
Combinatorial markets are a canonical setting in which economic and computational theory collide.  In a combinatorial market scenario, a variety of indivisible and heterogeneous goods are to be allocated among buyers with idiosyncratic and potentially complex preferences over subsets.  A natural economic goal is to find (and implement) allocations that are efficient, meaning that the total value generated by the allocation is maximized.  Over the past decade, the algorithmic mechanism design community has made a concerted effort to understand the interplay between the requirement of aligning buyer incentives (from economics) and the tool of approximation (from computer science) in the context of this welfare maximization problem.


An important observation from economic theory is that, in many scenarios, markets can be resolved using a very simple instrument: item prices.  Rather than execute an auction or other mechanism,
one sets an appropriate price for each good and allow buyers to select their favorite bundles.  Indeed, this is the strategy employed by most retail outlets in the western world, even in scenarios where items are truly heterogeneous (e.g., antique stores).  This prevelance can be attributed to the natural and distributed manner in which buyers can respond to prices posted by a seller.


In markets with limited supply, a desirable property of item prices is that \emph{each} buyer can \emph{simultaneously} obtain her most demanded set.  That is, no item is overdemanded at the specified prices.  This is a natural market fairness criterion, and (as argued elsewhere \cite{Guruswami2005}) is important for maintaining buyer satisfaction. Following \cite{FGL-13}, we will say that a choice of prices is \emph{stable} if it satisfies this fairness condition.
When each buyer wants at most a single item, this property is analogous to envy-freeness (no buyer prefers another's outcome to her own).  Hence, for unit-demand buyers, stable prices are equivalent to envy-free prices \cite{Guruswami2005}.

It is well known that when each buyer wants at most a single item (i.e., unit-demand preferences), or when the value a buyer derives from an item is independent of what other items she is allocated (i.e., additive preferences), there always exist stable prices
for which the resulting market outcome achieves optimal efficiency \cite{Shapley1971}. This is referred to as a Walrasian (or competitive) equilibrium.  Such prices exist more generally for the class of gross substitutes preferences \cite{Kelso1982}. This class includes all additive and unit-demand valuations, but is strictly contained in the class of submodular valuations.  To summarize, when buyers have gross substitutes valuations, item pricing can efficiently resolve a market.

In this work we study the power of stable item prices to approximate welfare when valuations do not satisfy the gross substitutes condition. It is known that for any class of valuations beyond gross substitutes, there exist instances for which no stable prices can clear the market in an optimally efficient assignment \cite{Gul1999}.
We ask: for what classes of buyer preferences do there exist prices that generate \emph{approximately} efficient outcomes, when buyers select their utility-maximizing sets?  Is it possible to recover a constant fraction of the optimal welfare when buyer valuations are, for example, submodular?

\subsection{Our results}

It might not come at a surprise that item prices are not sufficient to obtain high welfare if items are complementary.
After all, if items have value only when sold together, then it seems intuitive to bundle (rather than setting individual prices).
Perhaps more surprisingly, we also show that a polynomial gap may exist even if all valuations are submodular. This is despite the 
existential nature of the question, which carries the assumption that buyer types are publicly known.  Even for two buyers with 
submodular valuations, there may not exist prices that attain any reasonable approximation to the optimal welfare.  Crucially, 
these lower bounds do not stem from computational considerations, but rather are existential.  That is, stable item prices 
necessarily provide a poor approximation for welfare.




\vspace{0.1in}\noindent {\bf Main Result:}
{\em There exists an instance with two buyers with submodular valuations, such that there does not exist a stable outcome that obtains better than an $\Omega(\sqrt{m})$-approximation to the optimal social welfare.}

\vspace{0.1in} \noindent

The result is formulated for the objective of welfare maximization. A different, and also very natural, objective is seller revenue, which has been 
extensively studied in the literature on the design of revenue-maximizing envy-free prices \cite{Balcan2008,Briest06,Cheung2008,Hartline2011}. 
Although our results do not have a direct implication for the revenue objective, as far as we are aware, all existing methods for approximating envy-free revenue proceed by comparing to the benchmark of optimal welfare. An implication of our result is that such an approach cannot generalize to submodular valuations and beyond.  Since the welfare generated via stable prices cannot approximate optimal welfare, certainly the revenue from stable prices cannot approximate this benchmark either. Any approximation analysis for the revenue of stable prices for submodular valuations would therefore require a fundamentally new approach.



\vspace{0.1in}
These negative results may lead one to suspect that a super-constant lower bound is unavoidable for any valuation class beyond gross substitutes, even for two buyers (similar in spirit to the result of \cite{Gul1999}, showing that the class of gross substitutes valuations is maximal with respect to Walrasian equilibrium existence). In Appendix \ref{app:budget-additive} we show that this is not the case: for two budget-additive buyers, there always exists a set of item prices that achieves a quarter of the optimal welfare.

We also consider single-minded valuations, for which we give a lower bound of $\Omega(\sqrt{m})$.
For this case, a matching upper bound of $O(\sqrt{m})$ follows from the results in \cite{Cheung2008}.
These results are obtained via an analysis of the configuration LP and its dual.

\vspace{0.1in}\noindent {\it Techniques.}
Our lower bound construction is based upon a connection between the existence of Walrasian prices and the integrality gap of the configuration LP for a market, but requires new ideas. In particular, a simple configuration LP argument presented in \cite{FGL-13} showing a lower bound of $\Omega(m)$ for the broader class of fractionally subadditive valuations completely fails for the class of submodular valuations. One reason is that the example in \cite{FGL-13} works by exploiting the fact that complement-free valuations can have ``hidden complementarities:" a marginal valuation function\footnote{Formally, given a valuation $\val$ on a set of items $M$ and a set $S \subseteq M$, the marginal valuation of a set $T \subseteq M \setminus S$ is defined by $\val(T|S) = \val(T \cup S) - \val(S)$.} of a complement-free valuation (i.e., the extra value derived from additional items on top of a fixed set of items) is not necessarily complement-free. One might therefore guess that submodular valuations, which do not exhibit such hidden complementarities~\cite{Lehmann2001}, might overcome this lower bound and yield a constant approximation. Another reason is that the example from \cite{FGL-13} uses symmetric valuation functions, but symmetric submodular functions are gross substitutes~\cite{Gul1999}, and thus lower bounds based on symmetric constructions cannot apply.

\subsection{Related work}


Our work is related to the literature on {\em algorithmic pricing} \cite{Guruswami2005}, which studies the problem of setting envy-free item prices in full-information settings.  This work focuses mainly on algorithms for finding prices that approximately maximize the seller's revenue, assuming that buyers are unit-demand\footnote{Note that, for unit-demand buyers, envy-freeness is equivalent to our notion of stability.}  \cite{Balcan2008,Briest06,Cheung2008,Hartline2011}.


One motivation for our work is the non-existence of Walrasian equilibrium in general combinatorial markets.
Characterizations of existence of Walrasian equilibria were studied in, for example, \cite{Aumann1975,Kelso1982,Leonard1983,Bikhchandani1997,Gul1999}.
Extensions of envy-free pricing for valuation classes beyond gross substitutes were studied in \cite{MuAlem2009,Fiat2009}, but whereas we focus on item prices those works consider more general payment rules that assign prices to bundles of goods.

Our model can be viewed as a relaxation of the notion of Walrasian equilibrium, where we do not insist that all items are sold.
A similar relaxation was recently considered in \cite{FGL-13}, showing that if, in addition to our relaxation, one can also bundle items into packages prior to setting prices, then at least half of the optimal social welfare can be obtained for arbitrary valuations.
Our negative result provides additional motivation for
such a bundling operation.
Indeed, our results illustrate that insisting on outcomes that simultaneously satisfy the demands of all agents, using item prices, can prevent a good approximation to social welfare.
The results of \cite{FGL-13} show that this difficulty can also be circumvented via bundling.

A different relaxation of Walrasian equilibrium was presented in \cite{FKL-12}.  In their equilibrium notion, no buyer wishes to add additional items to his allocation (but might prefer a non-superset of their current allocation).  This can be viewed as a partial relaxation of the buyer-side equilibrium constraints in a Walrasian equilibrium.  In contrast, our approach is to relax the seller-side constraint that all items must be sold.

\section{Model and Preliminaries}
\label{sec:model}
The setting considered in this work consists of a set $\items$ of $\nitem$ indivisible objects and a set of $n$ buyers.
Each buyer has a valuation function $\vali(\cdot) : 2^\items \to \reals_{\geq 0}$ that indicates his value for
every set of objects. 
As standard, we assume that valuations are monotone non-decreasing (i.e., $\vali(S) \leq \vali(T)$ for every $S \subseteq T \subseteq \items$) and are normalized so that $\vali(\emptyset) = 0$.
The profile of buyer valuations is denoted by $\vals=(\val_1,\dotsc,\val_n)$.

An {\em allocation} of $M$ is a vector of sets $\allocs = (\alloc_0, \alloc_1, \dotsc, \alloc_n)$, where $\alloci$ denotes the bundle assigned to buyer $i$, for $i \in [n]$, and $\alloc_0$ is the set of unallocated objects; i.e., $\alloc_0 = M \setminus \cup_{i \in [n]}\alloci$. It is required that $\alloc_i \cap \alloc_k = \emptyset$ for every $i\neq k$.
The social welfare of an allocation $\allocs$ is $\SW(\allocs)=\sum_{i=1}^{n}\vali(\alloci)$, and the optimal welfare is denoted by OPT.
An allocation $\allocs$ gives an $\alpha$-approximation for the social welfare if $\SW(\allocs) \geq (1/\alpha) \cdot \opt$.
A price vector $\prices=(\price_1, \dotsc, \price_{m})$ consists of a price $\price_j$ for each object $j \in M$.
As standard, we assume that each buyer has a quasi-linear utility function; i.e., the utility of buyer $i$ being allocated bundle $\alloc_i$ under prices $\prices$ is $\utili(\alloci, \prices) = \vali(\alloci) - \sum_{j \in \alloci}\price_j.$

Given prices $\prices=(\price_1, \dotsc, \price_{m})$, the {\em demand correspondence} $\demandi(\prices)$ of buyer $i$ contains the sets of objects that maximize buyer $i$'s utility:
\[
\demandi(\prices) = \left\{S^*:  S^* \in\argmax_{S\subseteq M}\{\utili(S,\prices)\}\right\}.
\]
An allocation $\alloci$ is optimal for buyer $i$ with respect to prices $\prices$ if $\alloci \in \demandi(\prices)$.
A tuple $(\allocs,\prices)$ is said to be {\em stable} if for every buyer $i$, $\alloci$ is optimal for $i$ with respect to $\prices$.  We will sometimes refer to such a tuple as a \emph{stable outcome}.  We refer to a pricing $\prices$ as being {\it stable} if there exists a supporting allocation $\allocs$ such that $(\allocs,\prices)$ is stable.




\subsection{Characterization}
\label{sec:config-lp}

In this section, we present a characterization of stable outcomes.
This characterization makes use of the configuration linear program (LP), which encodes the problem of maximizing social welfare over all fractional allocations of a combinatorial market.
The configuration LP is given by the following linear program:

\begin{eqnarray}
\max & & \sum_{i,S} \vali(S)\cdot x_{_{i,S}} \nonumber\\
\mbox{s.t.} & & \sum_{S \subseteq M} x_{_{i,S}} \leq 1 \mbox{ for every } i \in N \nonumber\\
& & \sum_{i, S \ni j} x_{_{i,S}} \leq 1 \mbox{ for every } j \in M \nonumber\\
& & x_{_{i,S}} \in [0,1] \mbox{ for every } i \in N, S \subseteq M 
\label{eq:config-lp}
\end{eqnarray}

It was shown in \cite{BM97} that a Walrasian Equilibrium exists if and only if the integrality gap of the configuration LP is $1$.  Since, by definition, an outcome is stable if and only if it is a WE over the set of items that it allocates, A straightforward corollary of that result is as follows:

\begin{corollary}
\label{cor:config.LP}
For any $M' \subseteq M$, there exists a stable tuple $(\allocs, \prices)$ with $\bigcup_{i} \alloci = M'$ iff the configuration LP restricted to the items in $M'$ has integrality gap $1$.
\end{corollary}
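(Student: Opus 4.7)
The plan is to reduce the statement directly to the Bikhchandani--Mamer equivalence via a restriction argument: a stable tuple allocating exactly $M'$ should correspond, in a suitable sense, to a Walrasian equilibrium on the market whose item set is $M'$ alone. Once that correspondence is established, Corollary~\ref{cor:config.LP} follows immediately from \cite{BM97}.

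For the ``only if'' direction I would argue as follows. Suppose $(\allocs,\prices)$ is stable with $\bigcup_i \alloci = M'$. Let $\prices'$ be the restriction of $\prices$ to $M'$. Since each $\alloci$ is contained in $M'$, and $\alloci \in \argmax_{S \subseteq M}\{\vali(S)-\sum_{j\in S}\pricei[j]\}$, we obtain in particular that $\alloci \in \argmax_{S \subseteq M'}\{\vali(S)-\sum_{j\in S}\pricei[j]\}$; together with market clearance $\bigcup_i \alloci = M'$, this says $(\allocs, \prices')$ is a Walrasian equilibrium on the restricted market with items $M'$. By \cite{BM97}, the configuration LP restricted to $M'$ therefore has integrality gap~$1$.

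For the ``if'' direction, I would invoke \cite{BM97} in the other direction: if the configuration LP restricted to $M'$ has integrality gap~$1$, then there is a Walrasian equilibrium $(\allocs',\prices')$ on the market with item set $M'$. I would then extend $\prices'$ to a price vector $\prices$ on all of $M$ by setting $\pricei[j] := \pricei[j]'$ for $j \in M'$ and $\pricei[j] := P$ for $j \notin M'$, where $P$ is chosen strictly larger than $\max_i \vali(M)$. Monotonicity of $\vali$ then implies that for any buyer $i$ and any set $S \subseteq M$ with $S \setminus M' \neq \emptyset$, replacing $S$ by $S \cap M'$ strictly increases utility, so every buyer's demand at $\prices$ is contained in $M'$. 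Since on the subsets of $M'$ the utilities are unchanged, $\alloci'$ remains demand-optimal at $\prices$, yielding a stable tuple $(\allocs',\prices)$ with $\bigcup_i \alloci' = M'$.

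Neither step seems to present a genuine obstacle; the only small subtlety is the extension argument in the backward direction, where one must be slightly careful that no buyer is tempted to grab an item from $M \setminus M'$, which is handled by pricing those items prohibitively high using monotonicity of the $\vali$. Everything else is essentially a restatement of stability as Walrasian equilibrium on the allocated sub-market, combined with \cite{BM97}.
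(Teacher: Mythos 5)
Your proposal is correct and takes essentially the same route as the paper: both reduce to the Bikhchandani--Mamer theorem via the observation that a stable outcome allocating exactly $M'$ is precisely a Walrasian equilibrium on the sub-market with item set $M'$. The paper states this equivalence as holding ``by definition,'' whereas you spell out the two small steps that make it precise---restricting prices to $M'$ in the forward direction, and extending prices to $M\setminus M'$ at a prohibitive level $P > \max_i \vali(M)$ in the backward direction---but the argument is the same.
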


\subsection{Valuation Classes}
The following hierarchy of complement-free valuation classes have been studied extensively in the literature.
\begin{itemize}
\item $\val$ is \emph{additive} if $\val(S) = \sum_{j \in S}\val(\{j\})$ for all $S \subset \items$. 
%
%
\item $\val$ is \emph{XOS} if there exists a collection of additive functions
$A_1(\cdot),\ldots ,A_k(\cdot)$ 
such that for every set $S \subseteq \items$,  $\val(S) = \max_{1\le i\le k}A_i(S)$.
\item $\val$ is \emph{submodular} if for every two sets $S \subseteq T \subseteq M$ and item $j \in M$, $\val(j|T) \leq \val(j|S)$ (where $\val(j|S) \coloneqq \val(S \cup \{j\}) - \val(S)$ for every set $S \subseteq M$).
\item $\val$ is \emph{budget additive} if there is a value $B \geq 0$:
$\val(S) = \min\{ \sum_{j \in S}\val(\{j\}), B \}$ for every set $S \subseteq M$.
\end{itemize}


These valuations exhibit the following hierarchy: additive $\subset$ budget additive $\subset$ submodular $\subset$ XOS,
%
%
where all containments are strict.  
See \cite{Lehmann2001} for a detailed discussion.

\section{Submodular Valuations}
\label{sec:submodular}

The following theorem, which is the main result of the paper, shows that even for two submodular buyers, a large gap in welfare might be unavoidable in stable outcomes.

%

\begin{theorem}
\label{thm:submod.sim}
There exists an instance with two buyers with submodular valuations, such that there does not exist a stable outcome that obtains better than $\Omega(\sqrt{m})$-approximation to the optimal social welfare.
\end{theorem}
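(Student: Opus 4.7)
The plan is to combine the configuration-LP characterization of Corollary~\ref{cor:config.LP} with an explicit construction tailored to the submodular setting. By that corollary, the welfare of any stable outcome allocating items $M' \subseteq M$ equals the optimal \emph{integer} welfare on $M'$, and such an outcome exists only if the configuration LP restricted to $M'$ has integrality gap exactly $1$. Thus it suffices to exhibit submodular $v_1, v_2$ on $m$ items such that (i) OPT is $\Omega(m)$, witnessed by some explicit partition of $M$, and (ii) for \emph{every} $M' \subseteq M$, either the integer optimum on $M'$ is $O(\text{OPT}/\sqrt{m})$, or the configuration LP restricted to $M'$ has integrality gap strictly greater than $1$.

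For the construction I would place $m \approx k^2$ items in a combinatorial structure (for instance a $k\times k$ grid, or the edges of a bipartite structure) and define $v_1,v_2$ as asymmetric coverage-type or matroid-rank-type functions, whose submodularity is verified directly via decreasing marginals. The asymmetry is forced by the observation recalled in the techniques paragraph that symmetric submodular valuations are gross substitutes, and a coverage- or rank-based design (rather than a more general XOS design) is forced because the $\Omega(m)$ XOS lower bound of \cite{FGL-13} exploits hidden complementarities in the marginals, a feature absent from submodular functions \cite{Lehmann2001}. The ``natural'' OPT partition will be the one that assigns each bidder the set on which her valuation acts additively, and that pushes her value up to roughly $m/2$.

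The main step is establishing (ii). For each candidate $M'$ whose integer welfare $L'$ exceeds the threshold $\text{OPT}/\Omega(\sqrt{m})$, I would construct a feasible fractional solution to the restricted configuration LP with value strictly greater than $L'$. The intended certificate gives each bidder a convex combination of several overlapping ``good'' sets, each contributing near-maximal value, with fractional weights chosen so no item's demand exceeds $1$ while the weighted welfare strictly beats any partition. Submodularity of $v_1,v_2$ is exactly what makes a convex combination of near-optimal sets have high expected value, whereas additivity and the asymmetric structure force any integer allocation to lose a factor of roughly $\sqrt{m}$ due to the disjointness constraint.

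The main obstacle is precisely the uniformity of this argument across all $M'$. A single integrality-gap instance on the full set $M$ is already guaranteed by \cite{Gul1999} for any class beyond gross substitutes, but here the seller may selectively \emph{withhold} items, effectively restricting to an arbitrary $M' \subsetneq M$ on which a Walrasian equilibrium might reappear. The construction must therefore be robust: after removing any subset of items whose integer OPT is still large, the same style of fractional improving certificate must continue to apply. Designing $v_1,v_2$ so that every ``welfare-rich'' submarket carries such an LP gap, while maintaining submodularity throughout, is the technical heart of the proof and what distinguishes this bound from the easier XOS lower bound.
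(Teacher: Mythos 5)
Your high-level framing is correct and matches the paper's: you invoke Corollary~\ref{cor:config.LP}, recognize that the seller may withhold items so the argument must be robust to restriction to any $M' \subseteq M$, correctly note that symmetry is fatal (symmetric submodular $\Rightarrow$ gross substitutes), and even guess the right order of magnitude $m \approx k^2$ with a bucketed or grid-like structure. But the proposal never produces an actual pair of submodular valuations and never exhibits the fractional improving certificate, and the route you sketch (``asymmetric coverage-type or matroid-rank-type functions'' on a grid) is not the one that works in the paper and is not shown to work here either. The key technical device you are missing is the \emph{submodularization trick}: the paper does not build a submodular function directly from coverage or rank primitives. Instead it takes a known non-submodular instance (agent 1 XOS, agent 2 unit-demand, from \cite{FGL-13}) and defines
\[
v_i(S) = h(|S|) + \varepsilon \cdot f_i(S), \qquad h(z) = z + \sum_{t=1}^{z} 1/t,
\]
where $f_1$ applies the XOS function to the best single bucket and $f_2$ sums a unit-demand function over buckets. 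Claim~\ref{claim:submodular} shows that for small enough $\varepsilon$ this dominant concave term forces submodularity regardless of the shape of $f_i$. Crucially, stability requires the integrality gap to be \emph{exactly} one, so even the $\varepsilon$-scaled component of welfare suffices to witness a gap; the $h$-term's effect on the optimum (forcing a near-equal split of sold items) is then what necessitates the bucketed design so that the per-bucket XOS/unit-demand conflict survives the split.

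Without this $h + \varepsilon f$ decomposition, your plan is at the stage of ``find submodular $v_1, v_2$ with property (ii)'' rather than a proof, and property (ii) is precisely the hard part you flag as the ``technical heart.'' The paper's Claim~\ref{cl:submodular-frac} is the place where that heart is supplied: for any $M'$ with $|M'| \geq 4\sqrt{m}$, it writes down an explicit fractional solution that strictly beats the integral optimum, by fractionally rotating items of the largest bucket between the two buyers. Your proposal gestures at ``a convex combination of several overlapping good sets'' but gives no mechanism for producing it, and in particular your intuition that ``submodularity of $v_1,v_2$ is exactly what makes a convex combination of near-optimal sets have high expected value'' is not how the gap is actually obtained --- if anything submodularity fights against fractional gains, which is why the construction leans on the non-submodular residual $f_i$. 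So: right corollary, right sensitivity to withheld items and to symmetry, but the concrete construction and the improving LP certificate --- the actual content of the theorem --- are absent.
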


\begin{proof}
Before going into the details of the proofs, let us provide a high-level description of our approach.  By Corollary \ref{cor:config.LP}, we wish to construct a pair of submodular valuations for which the configuration LP has integrality gap strictly greater than $1$, even as items are removed from the market, unless almost all items are removed.  We begin by considering an instance that is not submodular, but that has this desired property.  We will use an example from \cite{FGL-13}, in which agent 1 is XOS and agent 2 is unit-demand. 
A key property of this example is that the XOS agent always wants all items and the unit-demand agent always wants exactly one item.  Our approach will be to massage these valuations into a pair of submodular functions, while retaining the lower bounds.

Call the original two valuations $v_1$ and $v_2$.  Our first idea is to make the functions submodular by adding a very large, strictly concave function $h$ to each of $v_1$ and $v_2$.  This will make the contribution of $v_1$ and $v_2$ to the social welfare negligible by comparison, but recall that a stable outcome exists only if the integrality gap is \emph{exactly} one, so it is enough for even a tiny component of the welfare to be improvable via fractional allocation.

Unfortunately, this modification ruins the properties of the original example.  Specifically, because of the concavity of $h$, the optimal assignment divides all items equally between the players, and this does not align with the original example. We therefore apply an additional trick: we divide the items into $\sqrt{m}$ equal-sized buckets, and implement the original example on each bucket separately.  


We now begin with the formal proof, which requires some preparation.
Define $h(z) = z + \sum_{i=1}^{z}1/i$.  Function $h$ is symmetric, submodular, and satisfies the following property.

\begin{claim}
\label{claim:submodular}
For every set function $f(\cdot)$, there is a sufficiently small $\varepsilon>0$ such that the function $v(S) = h(|S|) + \epsilon \cdot f(S)$ is submodular for any $\varepsilon\ge\epsilon\ge 0$.
\end{claim}


\begin{proof}
In order to ensure that $v(\cdot)$ is submodular we show below that marginal values are decreasing.
Let $S_1\subsetneq S_2$. We need to ensure that
\[
v(S_1\cup\{i\})-v(S_1)\ge v(S_2\cup\{i\})-v(S_2).
\]
This inequality for a fixed $\epsilon$ is equivalent to the following
\[
\frac{1}{|S_1|+1}-\frac{1}{|S_2|+1}\ge\epsilon\cdot\left(f(S_2\cup\{i\})-f(S_2)- f(S_1\cup\{i\}) +f(S_1)\right)
\]
We observe that the left hand side of the last inequality has a lower bound of $\frac{1}{(n+1)(n+2)}$ and the right hand side has an upper bound of $2
\epsilon\cdot\max_{S\subseteq [n]}f(S)$. Therefore, the inequality holds true for any $\epsilon<\varepsilon=\frac{1}{2(n+1)(n+2)\max_{S\subseteq [n]}f(S)}$.
\qed
\end{proof}

We are now ready to construct our valuations.
Consider an instance with 2 buyers and $m=k^2$ items for some integer $k>0$.
Suppose that the $k^2$ items are partitioned into $k$ buckets, $B_1, \ldots, B_k$, where $|B_i|=k$ for every $i=1, \ldots, k$.

The valuation functions of the buyers are as follows.
Define $XOS(S)=|S|$ for every $S$ such that $|S|> 1$, and $XOS(S)=2$ if $|S|=1$.  
The valuation function of the first buyer is given by
\begin{align*}
v_1(S) & = h(|S|) + \varepsilon \cdot f_1(S),\quad \text{ where } 
f_1(S) = \max_{j=1,\ldots,k} XOS(S\cap B_j).
\end{align*}
We now turn to the second buyer.
Let 
$Unit(S)=1-\frac{1}{k}$  for every $S$ such that $|S|\geq1$ (and $0$ if $S=\emptyset$).  Note $Unit(S)$ is a symmetric unit-demand function.
The valuation function of the second buyer is given by
\begin{align*}
v_2(S) & = h(|S|) + \varepsilon \cdot f_2(S),\quad \text{ where }
f_2(S) =\sum_{j=1}^{k} Unit(S\cap B_j).
\end{align*}
By Claim~\ref{claim:submodular}, we can take sufficiently small $\varepsilon$ such that both valuation functions $v_1(\cdot)$ and $v_2(\cdot)$ are submodular.

Recall that our solution concept allows for some items to be unsold. Our next argument is similar in spirit to the argument for XOS valuations given in \cite{FGL-13} (see Appendix~\ref{app:xos-sim}). Let $K$ denote the set of items sold. The maximum of $h(S)+h(K\setminus S)$ is attained when $|S|$ is as close to $|K|/2$ as possible. Therefore, for a sufficiently small $\varepsilon$ any integral allocation that maximizes $v_1(S)+v_2(K\setminus S)$ must divide items equally among two buyers (up to a single item). Among these,
the optimal integral allocation is one that maximizes $f_1(S)+f_2(K\setminus S)$.
We can now show that if $K$ is not too small, then a stable outcome cannot exist.

\begin{claim}
\label{cl:submodular-frac}
If $|K|\ge 4k$, then there exists a fractional allocation that does strictly better than the optimal integral allocation.
\end{claim}

\begin{proof}
Suppose that $|K|\geq 4k$. Without loss of generality let $B_1$ be the bucket with the largest number of sold items, say $t$. We note that the number of items $t$ in $B_1$ is at least $4$. 


One can easily verify that the allocation that maximizes $f_1(S)+f_2(K\setminus S)$ is one that assigns all items in $B_1$ to the first buyer, and a single item from every remaining non-empty bucket to the second buyer. Indeed, this allocation maximizes the value of $f_1(\cdot)$ and almost matches the maximum of $f_2(\cdot)$, and these two cannot be maximized simultaneously.
Let $S_1$ and $S_2$ be sets of items assigned respectively to the first and second buyer in this allocation. We note that $|S_1|\le k$ and $|S_2|\le k-1$. Therefore, the remaining at least $4k-(2k-1)=2k+1$ unallocated items can be divided among the two buyers into two sets $J_1$ and $J_2$ such that buyers receive the same (up to one) number of items, as required. Thus in the optimal integral allocation the first buyer receives $S_1\cup J_1$ and the second buyer receives $S_2\cup J_2$.

By the characterization given in \cite{BM97} (see also Section~\ref{sec:config-lp}), this allocation admits a stable pricing if and only if this is also an optimal fractional allocation. We next construct a fractional allocation with a higher social welfare.

We observe that $|J_2|>k$, so we can choose $T\subset J_2$ such that $|T|=|S_1|=t.$ We recall that $t\ge 4.$ Let $\pi: S_1\to T$ be a bijection between items in $S_1$ and $T$. We consider the following fractional solution $\{y_{i,S}\}$:

\begin{eqnarray*}
&y_{1,S_1\cup J_1} &= \frac{t-2}{t-1}\\
&y_{1,\{j\}\cup J_1\cup T\setminus\{\pi(j)\}} &= \frac{1}{t(t-1)}, \quad\text{ for each } j\in S_1\\
&y_{2,\{j\}\cup S_2\cup J_2\setminus\{\pi(j)\}} &= \frac{1}{t}, \quad\text{ for each } j\in S_1
\end{eqnarray*}
Let $\ell$ be the number of nonempty buckets in $K$.
One can easily verify that this is a feasible solution, and the welfare obtained by $\{y_{i,S}\}$ is given by
\begin{align*}
SW(y) &= h(|S_1\cup J_1|)+h(|S_2\cup J_2|)+ \frac{t-2}{t-1}\cdot f_1(S_1\cup J_1)  \\
&+\sum_{j\in S_1}\frac{1}{t(t-1)}\cdot f_1(\{j\}\cup J_1\cup T\setminus\{\pi(j)\}) \\
&+\sum_{j\in S_1}\frac{1}{t}\cdot f_2(\{j\}\cup S_2\cup J_2\setminus\{\pi(j)\})\\
&\ge h(|S_1\cup J_1|)+h(|S_2\cup J_2|)+ \frac{t-2}{t-1}\cdot t  \\
&+|S_1|\cdot\frac{1}{t(t-1)}\cdot 2 +\frac{1}{t}\cdot\sum_{j\in S_1} f_2(j\cup S_2)\\
&= h(|S_1\cup J_1|)+h(|S_2\cup J_2|)+ \frac{t-2}{t-1}\cdot t + \frac{2}{t-1}+ \ell\cdot\left(1-\frac{1}{k}\right).
\end{align*}
The welfare in the optimal integral allocation $x$ is given by
\[
SW(x) = h(|S_1\cup J_1|)+h(|S_2\cup J_2|) + t + (\ell-1)\cdot\left(1-\frac{1}{k}\right).
\]

It follows that
$
SW(y)-SW(x)\ge 1-\frac{1}{k} - \frac{t-2}{t-1}>0.
$\qed
\end{proof}

We conclude that no allocation of at least $4k$ items admits a stable outcome. Now since $k = \sqrt{m}$ and $v_1(S)=\Theta(|S|)$, $v_2(S)=\Theta(|S|)$, this gives a gap of $\Omega(\sqrt{m})$ in the social welfare, completing the proof of Theorem \ref{thm:submod.sim}.\qed
\end{proof}




\section{Single-Minded Valuations}
\label{sec:singleminded}
We now consider buyers with single-minded valuations.
A single minded buyer is interested in a single set and derives no utility from any strict subset of it.
Single minded buyers exhibit strong complementarity between items.


We first present an instance such that there does not exist a stable outcome that obtains better than $\Omega(\sqrt{m})$ approximation to the optimal social welfare.
Readers who are familiar with the NP-hardness of welfare approximation within a factor $\Omega(\sqrt{m})$ for single minded bidders (e.g., \cite{LOS99}) might assume that our $\Omega(\sqrt{m})$ lower bound follows similarly.
We reiterate here that our lower bound is purely existential and is completely independent of computational considerations, and is therefore unrelated to the computational result.

\begin{example}
Consider a combinatorial auction with $n$ single-minded buyers and $m=\frac{n(n-1)}{2}$ items. Each item is represented by a pair of integers $(a,b)$
such that $M=\{(a,b)\mid a+b\le n,\quad a,b\in[n]\}$. Each buyer $i$ for $i\in[n-1]$ desires a set $S_i^{*}=\{(a,b)\mid a=i\text{ or }b=n-i,\quad (a,b)\in M \}$ of size $n-1$, for a value of $n+1$. Note that $|S_{i}^{*} \cap S_{k}^{*}| = 1$ for all $i \neq k$, and every object $a \in M$ is contained in at most $2$ sets. Buyer $n$ desires the set of all items, $M$, for a value of $m$.

The optimal outcome allocates all objects to buyer $n$, for a total value of $m$. Suppose this optimal outcome can be supported by item prices $\prices$.
Then $\sum_{a \in M} \pricei[a]\le m$, but a simple counting argument below demonstrates that \newline$\sum_{a \in S_{\ell}^{*}} \pricei[a] \le n < n+1$ for some
$\ell$.
\[
n(n-1)=2\vali[n](M)\ge 2\sum_{a \in M} \pricei[a]\ge\sum_{i=1}^{n-1}\sum_{a\in S_{i}^{*}}\pricei[a].
\]
Buyer $\ell$ would demand set $S_{\ell}^{*}$ at these prices, a contradiction. Thus, at any pricing equilibrium, some set $S_{\ell}^{*}$ must be allocated to a buyer $\ell$; but this implies that no other set $S_{k}^{*}$ can be allocated, as $S_{k}^{*} \cap S_{i}^{*} \neq \emptyset$ for all $k \neq i$.  Thus the total social welfare at any equilibrium outcome with item prices is at most $n+1=O(\sqrt{m})$.
\label{ex:single-minded}\qed
\end{example}

Finally, we note that the last example is essentially the worst possible case.
That is, for every instance of single-minded buyers, there exists a stable outcome that achieves at least an $O(\sqrt{m})$ approximation to the optimal social welfare.




The latter statement is implied by \cite{Cheung2008}, i.e., the agrument goes through a careful analysis of the configuration LP along with its dual program.
In particular, one can show how to construct a price vector for which (i) no buyer can get strictly positive utility, and (ii)
there exists a fractionally optimal solution in which all buyers obtain zero utility.
Then, a standard greedy algorithm applied to the buyers in the support of the fractional
solution will yield an $O(\sqrt{m})$ approximation.
Since it is not possible for any buyer to obtain strictly positive utility, this outcome is necessarily stable.



\section{Conclusions and Open Problems}
\label{sec:conclusions}
This paper studies the problem of resolving combinatorial markets using item prices and utility-maximizing allocations, with the objective of maximizing economic efficiency.
We approach the problem through the lens of approximation, and ask how much social welfare can be obtained using item prices.
We establish negative results, showing that no stable pricing can provide reasonable approximation to social welfare in several cases of interest.
These negative results hold even for simple complement-free valuations, and even with only two buyers.

\paragraph{Open problems}

Our model and results leave a number of directions for future research.
%

First, we show that even with only two submodular buyers, there is a lower bound of $\Omega(\sqrt{m})$ on the approximation ratio of social welfare.
On the positive side, for gross-substitute buyers there always exists a competitive equilibrium, i.e., efficient stable pricing. 
It is left open whether there are subclasses of submodular valuations (such as budget additive or coverage valuations) that admit constant approximation to optimal welfare.
For the case of two consumers with budget additive valuations we answer this question in the affirmative.
However, for larger instances of budget additive valuations or other valuation classes this question is left open.

Second, throughout the paper we assume that items are indivisible and heterogeneous. It would be an interesting research direction to relax partially these assumption. For example, one could assume that every item in the market has a few identical copies and that every buyer does not want more than one copy of each item. It would be interesting to see how the efficiency of item pricing depends on the minimal number of item copies. Given our mostly negative results for valuations with complements, we would like to understand under what relaxations one can obtain positive results (e.g., constant approximation of the optimal social welfare) for single-minded buyers.

\section*{Acknowledgments}
The work of Michal Feldman was partially supported by the European Research Council under the European Union's Seventh Framework Programme (FP7/2007-2013) / ERC grant agreement number 337122.

\bibliographystyle{plain}
\bibliography{item-prices}

\medskip

\appendix
\section*{APPENDIX}
\setcounter{section}{0}

\section{A lower bound for XOS valuations}
\label{app:xos-sim}

For completeness, we present an example from \cite{FGL-13}, demonstrating a linear gap in welfare for XOS valuations.

\begin{theorem}[\cite{FGL-13}]
There exists an instance with two buyers with XOS valuations, such that there does not exist a stable outcome that obtains better than $\Omega(m)$ ($\ge m/3$) approximation to the optimal social welfare.
\end{theorem}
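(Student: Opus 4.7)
The plan is to instantiate Corollary \ref{cor:config.LP} with one XOS buyer and one unit-demand buyer on $m$ items and rule out every candidate stable outcome directly. Take $v_1(S) = |S|$ when $|S| \ge 2$, $v_1(\{j\}) = 2$ for every $j$, and $v_1(\emptyset) = 0$; this is XOS via the decomposition consisting of one additive clause of weight $1$ per item, plus $m$ singleton clauses, the $j$-th of which assigns weight $2$ to item $j$ alone and $0$ to every other item. Take $v_2(S) = 1 - 1/m$ for every nonempty $S$ and $v_2(\emptyset) = 0$; this is unit-demand, hence XOS. Allocating every item to buyer $1$ already gives welfare $m$, so $\opt \ge m$, and the theorem reduces to showing that every stable outcome has welfare strictly less than $3$.

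I would case-split on whether buyer $2$ receives any item. If she receives some item $j$, then stability forces $p_j \le 1 - 1/m$ (she must weakly prefer $\{j\}$ to $\emptyset$). If simultaneously buyer $1$'s bundle $X_1$ has $|X_1| \ge 2$, the marginal $v_1(X_1 \cup \{j\}) - v_1(X_1)$ equals $(|X_1|+1) - |X_1| = 1$, so buyer $1$ would strictly prefer $X_1 \cup \{j\}$ to $X_1$ unless $p_j \ge 1$, a contradiction. Hence $|X_1| \le 1$, and the total welfare is at most $v_1(\{i\}) + v_2(\{j\}) = 2 + (1 - 1/m) < 3$.

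If instead buyer $2$ is allocated $\emptyset$, deterring her requires $p_j \ge 1 - 1/m$ for every $j$. The XOS ``singleton boost'' $v_1(\{j\}) = 2$, combined with buyer $1$'s preference for $X_1$ over each singleton $\{j\}$, yields $u_1(X_1) \ge 2 - p_j$, i.e.\ $p_j \ge 2 - u_1(X_1)$, for every $j$. Summing over $j \in X_1$ and using $u_1(X_1) = |X_1| - \sum_{j \in X_1} p_j$ rearranges to $u_1(X_1) \ge |X_1|/(|X_1|-1)$ whenever $|X_1| \ge 2$, while the price lower bound $p_j \ge 1 - 1/m$ gives $u_1(X_1) \le |X_1|/m$. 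Combining these forces $|X_1| - 1 \ge m$, which is impossible, so again $|X_1| \le 1$ and the welfare is at most $2$.

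The step I expect to require the most care is verifying that the case split is genuinely exhaustive --- in particular, sealing the loophole where buyer $2$ is assigned additional items alongside the single item she values. I would handle this by noting that any such extras must satisfy $p_j = 0$ (otherwise buyer $2$ would strictly prefer dropping them), which reactivates the marginal-$1$ argument against buyer $1$'s XOS value and excludes $|X_1| \ge 2$ once more. After these checks, every stable allocation satisfies $\SW \le 3 - 1/m$ while $\opt \ge m$, so the approximation ratio is at least $m/(3 - 1/m) \ge m/3 = \Omega(m)$, as claimed.
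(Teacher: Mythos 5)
Your argument is correct, and the instance is essentially the paper's (the paper has buyer~$1$ unit-demand at value $1/2-\delta$ and buyer~$2$ XOS at $\max(1,|S|/2)$; doubling and swapping roles with $\delta = 1/(2m)$ gives your pair), but the proof technique is genuinely different. The paper invokes Corollary~\ref{cor:config.LP} -- that is, the Bikhchandani--Mamer characterization reducing existence of stable prices to the configuration LP having integrality gap~$1$ -- and then exhibits an explicit fractional allocation beating the integral optimum for any set of more than two sold items. You never actually route through that corollary (your opening sentence says you will ``instantiate'' it, but the body of the argument does not): instead you directly derive incompatible inequalities on the prices from the two buyers' best-response conditions. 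Concretely, if buyer~$2$ holds any item $j$ then $p_j \le 1-1/m$, while $|X_1|\ge 2$ would force $p_j\ge 1$ via the unit marginal of the XOS buyer; and if buyer~$2$ holds nothing then $p_j\ge 1-1/m$ for all $j$, while the singleton boost $v_1(\{j\})=2$ combined with $v_1(X_1)=|X_1|$ forces $u_1(X_1)\ge |X_1|/(|X_1|-1)>1 \ge |X_1|/m \ge u_1(X_1)$ when $|X_1|\ge 2$. Both cases cap each bundle at one item, yielding welfare at most $3-1/m$ against an optimum of $m$. Your approach is more elementary and self-contained (no LP duality, no fractional solution), whereas the paper's approach, while heavier, is the one that generalizes to the submodular construction in Section~\ref{sec:submodular}, where constructing the blocking fractional allocation is the whole game. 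One small caveat worth making explicit in your write-up: your price inequalities (in particular the ``extras must be free'' step) implicitly assume $p_j\ge 0$, which is the standard convention in this literature but is left unstated in the model section; the LP-based route derives prices as dual variables and so gets non-negativity for free.
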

\begin{proof}
Consider an auction with $m$ items and two buyers with the following symmetric XOS valuations.
Buyer 1 is unit-demand and values every subset at $1/2-\delta$, for a sufficiently small $\delta$ (that will be determined soon).
Buyer 2 values any subset of size $k$ at $\max(1,k/2)$; it is easy to verify that this is an XOS valuation. The socially optimal outcome allocates all objects to the second buyer for a total value of $m/2$ when $m>2$, and allocates one item to each of the buyers when $m=2$ for a total value of $3/2-\delta$. We claim that there is no stable pricing that sells more than two items. For every $m > 2$, an optimal integral solution obtains a value of $m/2$ (by giving all the items to the XOS buyer). By the characterization given in \cite{BM97} (see also Section~\ref{sec:config-lp}), this allocation admits a stable pricing if and only if $m/2$ is the optimal fractional solution of the corresponding configuration LP. We will now show a fractional solution that obtains value greater than $m/2$ for every $\delta < \frac{1}{2(m-1)}$.
Consider the fractional solution in which the allocation of the first (unit demand) buyer is given by $x_{1,\{j\}}=1/m$ for every $j \in [m]$, and the allocation of the second (XOS) buyer is given by $x_{2,\{j\}}=\frac{1}{m(m-1)}$ for every $j \in [m]$, and $x_{2,[m]}=\frac{m-2}{m-1}$.
One can easily verify that this is a feasible solution, and the welfare obtained by $\{x_{i,S}\}$ is given by $SW(x)=\frac{m}{2}+\frac{1}{2(m-1)}-\delta$,
which is greater than $\frac{m}{2}$ for every $\delta < \frac{1}{2(m-1)}$, as required. We conclude that a stable outcome can have at most two allocated object, and thus the highest welfare that can be obtained in a stable allocation is $3/2-\delta$, resulting in a linear gap of $m/3$.
\end{proof}



\section{Two budget additive valuations: an upper bound}
\label{app:budget-additive}

It has been shown in the last two sections that for submodular and XOS valuations, one cannot hope for a constant approximation, even for instances with only two buyers. One may wonder whether this is an unavoidable property of any valuation beyond gross substitutes.
In this section we show that this is still too early to jump into this conclusion.
In particular, for the case of two budget-additive buyers, constant approximation can be achieved.

A budget-additive valuation of buyer $i$ is specified by budget $B$ and item values $\val_{ij}$ for every $j \in M$.
The value of some set $S$ is then $\vali(S) = \min\{B, \sum_{j \in S} v_{ij}\}$. Without loss of generality we may assume that for each item $j$ neither $\vali[1j]\le B_1$, nor $\vali[2j]\le B_2$.

\begin{theorem}
For every instance of two budget-additive buyers, there exists a stable outcome, which is $4$-approximation to the optimal social welfare.
\end{theorem}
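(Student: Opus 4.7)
The plan is to exhibit a particular stable outcome and show its welfare attains $\OPT/4$. Following the paper's setup, I may assume $v_{ij} \le B_i$ (else replace $v_{ij}$ by $\min(v_{ij}, B_i)$, which preserves $v_i(\cdot)$). I partition the items by per-item preference: let $L = \{j : v_{1j} \ge v_{2j}\}$ and $R = M \setminus L$, and set prices $p_j = v_{2j}$ for $j \in L$ and $p_j = v_{1j}$ for $j \in R$. At these prices, each item $j \in L$ has zero marginal for buyer $2$ and nonnegative marginal for buyer $1$, and symmetrically for $R$. Hence each buyer's strict preferences are confined to her own side, while items on the opposite side are at most indifferent (and strictly averse once her budget binds). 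A stable outcome then exists in which each buyer obtains a utility-maximizing subset on her own side, and tie-breaking in favor of welfare lets us assign indifferent items to whoever benefits more.

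To analyze the welfare, let $(X_1^*, X_2^*)$ attain $\OPT$ and decompose $X_i^* = L_i^* \cup R_i^*$ with $L_i^* = X_i^* \cap L$ and $R_i^* = X_i^* \cap R$. By subadditivity of budget-additive valuations, $\OPT \le v_1(L_1^*) + v_1(R_1^*) + v_2(L_2^*) + v_2(R_2^*)$. I expect two key lemmas. First, a knapsack-style bound: buyer $1$'s utility-maximizing subset of $L$ has value at least $v_1(L)/2$ (and symmetrically for buyer $2$ on $R$). Since marginals on her own side are nonnegative, a greedy fill either exhausts $L$ or stops at the budget, in which case either the prefix already has value at least $B_1/2$ or a single remaining item does (invoking $v_{1j}\le B_1$). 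Second, the cross-terms $v_1(R_1^*)$ and $v_2(L_2^*)$ from $\OPT$ are bounded by $v_2(R)$ and $v_1(L)$ respectively, using the per-item inequalities on $R$ and $L$ together with a careful case split on budget-binding.

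Putting the two lemmas together, the stable outcome has welfare at least $\tfrac{1}{2}(v_1(L) + v_2(R)) \ge \OPT/4$. The main obstacle I anticipate is the cross-term lemma: the per-item inequalities $v_{1j}\le v_{2j}$ on $R$ only upper-bound $\sum_{j \in R_1^*} v_{1j}$ by $\sum_{j \in R} v_{2j}$, which exceeds $v_2(R)$ exactly when buyer $2$'s budget binds. Resolving this requires tracking which buyer's budget binds in $\OPT$ versus in the stable outcome, and it is here that slack from the knapsack bound, together with the indifferent items that tie-breaking can reassign, must be redeployed to cover the shortfall. The combined budget-binding losses from the two lemmas are exactly what produce the $\OPT/4$ guarantee rather than $\OPT/2$.
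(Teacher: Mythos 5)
Your proposal correctly identifies the $L/R$ partition and cross-pricing that the paper uses in one of its two cases, but it omits the paper's crucial case split on whether $\sum_j v_{1j} \geq B_1$, and this omission creates two genuine gaps.

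First, your ``knapsack lemma'' is false as stated. You claim buyer $1$'s utility-maximizing subset of $L$ has value at least $v_1(L)/2$, but the demand set maximizes \emph{utility}, not value, and these can diverge sharply. For instance, with $L = \{1,2\}$, $v_{11} = B_1/3$, $v_{12} = B_1$, $v_{21} = 0$, $v_{22} = B_1 - \epsilon$: prices are $(0, B_1 - \epsilon)$, and buyer $1$'s unique demand set is $\{1\}$ (utility $B_1/3$, beating utility $\epsilon$ from any set containing item $2$), yet $v_1(\{1\}) = B_1/3 < v_1(L)/2 = B_1/2$. A greedy packing up to the budget is not a demand set. Second, you already suspect the cross-term lemma is shaky, and it is in fact false: with $L = \emptyset$, $R$ containing ten items each with $v_{2j} = 1 = B_2$ and $v_{1j} = 0.999$ and $B_1$ large, we get $v_1(L) + v_2(R) = 1$ while $\opt \approx 10$, so the inequality $v_1(L) + v_2(R) \geq \opt/2$ fails outright. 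The slack you hope to redeploy does not exist in general; the trouble is precisely that items foregone by the budget-constrained buyer can be worth nearly their full additive sum to the other buyer.

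The paper avoids both problems by splitting on whether $\sum_j v_{1j} \geq B_1$ (with $B_1 \geq B_2$ WLOG). In Case 1, it does not use the $L/R$ partition at all: it prices every item at $v_{1j}$, picks any $S_1$ with $B_1/2 \leq \sum_{j\in S_1} v_{1j} \leq B_1$ (so all subsets of $S_1$ are in buyer $1$'s demand), lets buyer $2$ take her demand set $D_2$ and buyer $1$ take $S_1 \setminus D_2$, and bounds welfare by $B_1/2 \geq (B_1+B_2)/4 \geq \opt/4$. In Case 2, $\sum_j v_{1j} < B_1$ means buyer $1$'s budget never binds, so she is effectively additive; then the $L/R$ partition is used, buyer $1$ gets \emph{everything} buyer $2$ declines, and the bound is $\opt \leq B_2 + \sum_j v_{1j} \leq 2\cdot\mathsf{SW}$, a $2$-approximation in this case. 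The case split is what makes each half of the argument clean; without it, your unified bound does not close.
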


\begin{proof}
Let $B_1$ and $B_2$ denote the budgets of buyers 1 and 2, respectively, and suppose without loss of generality that $B_1 \geq B_2$.
We distinguish between two cases.

\vspace{0.1in}\noindent {\bf Case 1:} $\sum_j \val_{1j} \geq B_1$ (i.e., the value of buyer 1 for the grand bundle exhausts her budget).
In this case, set a price of $\price_j = \val_{1j}$ for every item $j$.
Thus, every bundle $S$ (including $S=\emptyset$) such that $\sum_{j \in S} \val_{1j} \geq B_1$ is in buyer 1's demand set (and gives her utility 0). Since $\vali[1j]\le B_1$ for each $j$ and $\sum_j \val_{1j} \geq B_1$ we can find a set $S_1$, such that $B_1\ge\sum_{j \in S_1} \val_{1j} \ge B_1/2$. Any subset of $S_1$ is a demand set of buyer $1$. Let $D_2$ be a bundle in buyer 2's demand set. We let $D_2$ be the allocation of buyer $2$. We note that
\[
\vali[2](D_2)\ge\sum_{j\in D_2}\pricei[j]=\sum_{j\in D_2}\vali[1j].
\]

We let $D_1=S_1\setminus D_2$ be the allocation of buyer $1$. We can estimate the social welfare of this allocation as follows.

\begin{align*}
\SW=\vali[1](D_1)+\vali[2](D_2)&=\sum_{j\in S_1\setminus D_2}\vali[1j]+\vali[2](D_2)\\
&\ge
\sum_{j\in S_1\setminus D_2}\vali[1j]+\sum_{j\in D_2}\vali[1j]\ge\sum_{j\in S_1}\vali[1j]\ge\frac{B_1}{2}.
\end{align*}

The optimal social welfare is at most $B_1+B_2$, which cannot exceed our $\SW$ by more than a factor of $4$.

%

\vspace{0.1in}\noindent {\bf Case 2:} $\sum_j \val_{1j} < B_1$.
Let $S_1 = \{j : \val_{1j} \geq \val_{2j}\}$, and
$S_2 = \{j : \val_{2j} > \val_{1j}\}$.
Consider the pricing where $\price_j = \val_{2j}$ for every item $j \in S_1$ and $\price_j = \val_{1j}$ for every item $j \in S_2$.
Since buyer 2 is indifferent about taking any item from $S_1$, there exists a set in buyer 2's demand set that is contained in $S_2$, we call it $D_2$.
We let buyer 2 to be allocated the items in $D_2$ and buyer 1 be allocated the items in $S_1 \cup (S_2 \setminus D_2)$. The social welfare $\SW$ for this allocation is
\[
\SW=\sum_{j\in S_1}\vali[1j]+\sum_{j\in S_2\setminus D_2}\vali[1j]+\min\InParentheses{\sum_{j\in D_2}\vali[2j],B_2}.
\]

We assume that $D_2\neq\emptyset,$ as otherwise we would have the optimal social welfare.
If $\val_2(D_2) + \val_1(S_2 \setminus D_2) < B_2$, then $\utili[2](S_2)>\utili[2](D_2)$ and we arrive at a contradiction with the fact that $D_2$ was a demand set of buyer $2$. Indeed,
\begin{align*}
\utili[2](D_2)&=\val_2(D_2) -\sum_{j\in D_2}\vali[1j] = \val_2(D_2) + \val_1(S_2 \setminus D_2)-\sum_{j\in S_2}\vali[1j]\\
& <  \min\InParentheses{\sum_{j\in S_2}\vali[2j],B_2}-\sum_{j\in S_2}\vali[1j] \quad = \quad \vali[2](S_2)-\sum\limits_{j\in S_2}\pricei[j]\quad=\quad\utili[2](S_2)
\end{align*}



Therefore, $\val_2(D_2) + \val_1(S_2 \setminus D_2) \geq B_2$, then it also holds that
\begin{equation}
\SW=\val_1(D_1) + \val_2(D_2) + \val_1(S_2 \setminus D_2) \geq B_2.
\label{eq:1}
\end{equation}
In addition to that, since $\vali[2](D_2)\ge\sum_{j\in D_2}\pricei[j]=\sum_{j\in D_2}\vali[1j]$, it follows that
\begin{equation}
\SW=\val_1(D_1) + \val_2(D_2) + \val_1(S_2 \setminus D_2) \geq \sum_{j \in M}\val_{1j}.
\label{eq:2}
\end{equation}
Consider the sum of Equations (\ref{eq:1}) and (\ref{eq:2}).
The left hand side is twice the social welfare obtained by the suggested allocation,
and the right hand side is $B_2 + \sum_{j \in M}\val_{1j}$, which is clearly an upper bound on OPT.
It follows that we get at least $1/2$ of the optimal social welfare.
\end{proof}


\appendix

\end{document}